\newtheorem{theorem}{Theorem}[section]
\newtheorem{definition}[theorem]{Definition}
\newenvironment{proof}{\removelastskip\par\medskip}
{\penalty-20\null\hfill$\square$\par\medbreak}
\begin{document}

\begin{frontmatter}



\title{Model selection of stochastic simulation algorithm 
 based on generalized \\ divergence measures}
\thanks[talk]{ This research was supported, in part, by grants from 
Universit\'e Cheikh Anta Diop, Dakar (S\'en\'egal). We are grateful to many seminar
participants and to anonymous referees for comments.}

\author{\small{ Papa NGOM  and  B. Don Bosco  DIATTA}}

\address{LMA-Laboratoire de Math\'ematiques Appliqu\'ees \\
 Universit\'e Cheikh Anta Diop BP 5005 Dakar-Fann S\'en\'egal\\
  e-mail : pngom@ucad.sn }

\begin{abstract}
\par \hspace{1cm}MCMC methods (Monte Carlo Markov Chain) are a class of methods used to perform simulations per a probability distribution $P$. These methods are often used when we have difficulties to directly sample per a given probability distribution $P$ . This distribution is then considered as a target and generates a Markov chain $(X_n)_{n\in\mathbb{N}}$ that, when $n$ is large we have $X_n\sim P$. These MCMC methods consist of several simulation strategies including the \emph{Independent Sampler (IS)}, the \emph{Random Walk of Metropolis Hastings \small{(RWMH)}}, the \emph{Gibbs sampler}, the \emph{Adaptive Metropolis (AM)} and \emph{Metropolis Within Gibbs (MWG)} strategy. Each of these strategies can generate a Markov chain and is associated with a convergence speed.
It is interesting, with a given target law, to compare several simulation strategies for determining the best. Chauveau and Vandekerkhove \cite{Chauv2007} have compared  IS and RWMH strategies  using the  Kullback-Leibler divergence measure. In our article we will compare our five simulation methods already mentioned using generalized divergence measures. These divergence measures are taken in family of $\alpha$-divergence measures \cite{Cichocki2010} , with a parameter $\alpha$. This is the R\'enyi divergence, Tsallis divergence and $D_\alpha$ divergence .  
\end{abstract}

\begin{keyword}
   MCMC methods, Metropolis-Hastings algorithm, Gibbs sampler, Adaptive Metropolis, Metropolis within Gibbs, simulation strategy, target density, proposal density, $\alpha$-divergence.\\
  AMS Subject Classification : 60J60, 62F03, 62F05, 94A17.
\end{keyword}
\end{frontmatter}
\section{Introduction}
 \par \hspace{1cm} In many areas of science, computer is essential. So computers are used for testing, calculations, simulations ... Beyond performance, the computation time is very important in the choice of calculation methods. In the field of statistical, the simulation of random variables are required to perform  assessments such integral calculation. Thus we resort to numerical analysis methods or probabilistic methods. It is the latter that concern us in particular.  The integral form is rewritten like an expectation $\mathbb{E}(X)$ and use the \emph{law of large numbers} for the Monte Carlo integration and the \emph{ergodic theorem} for MCMC methods. We will focus here on MCMC methods with which one performs sampling according to density $ f $ via Markov chains. Among MCMC methods we have Metropolis-Hastings algorithms, Gibbs sampler and others adaptive and hybrid algorithms. 
\par \hspace{1cm}Thus the major challenge of these algorithms, like many computational techniques, is the saving time i.e. their speedy convergence to the stationary distribution with density $f$. This is due to the fact that the convergence time is a brand of efficiency. This paper investigates the problem of selecting a good strategy simulation. We have several simulation strategies that are \emph{Independence Sampler}, \emph{Random Walk of Metropolis-Hastings} \cite{Millet}, \emph{Gibbs sampler}, \emph{Metropolis Within Gibbs} \cite{Latusz2013}, \emph{Adaptive Metropolis} \cite{Haario2001}...
 The problem for all these strategies is the time they take to the generated a Markov chain which converge to the stationary distribution with  probability density function $f$.
\par \hspace{1cm}The objective of this paper is to compare different strategies for simulation in order to select the best.  Chauveau and Vandekerkhove \cite{Chauv2007} have given in their paper a methodology for selecting the best of $k$ candidate strategies with the Kullback-Leibler divergence. We use in our article generalized divergence measures that are $\alpha$-divergence \cite{Poczos2011} to compare simulation strategies. 
These divergence measures that we use are \emph{$\alpha$-divergence} $D_\alpha$, the \emph{R\'enyi $\alpha$-divergence} $R_\alpha$, the \emph{Tsallis $\alpha$-divergence} $T_\alpha$. We know that these divergences are all generalizations of the Kullback-Leibler divergence. We have the Kullback-Leibler divergence when the parameter $\alpha$ of these divergence measures tends to 1. If we have two simulation strategies $s_1$ and $s_2$ of respective probability densities 
$p^n_1$\footnote{$p_1^n$ is density function of $X_n$ of Markov chain generated by strategy $s_1$} and $p_2^n$ at time $n$ and a target density $f$ which we want to have samples we can use these divergence measures to compare $s_1$ and $s_2$. Indeed, if we consider for example the R\'enyi divergence, we can compare $R_{\alpha}(p_1^n, f)$   and $R_{\alpha}(p_2^n, f)$, for each iteration $n$, to see which of $s_1$ and $s_2$ is the best.
\par \hspace{1cm}We will show in Section 2 these three divergence measures. We know that they belong to the family of Csisz\'ar $\phi$-divergences. In Section 3 we will show two important results that are convergence to 0 of these divergences and building estimator for each divergence. We will show in Section 4 the asymptotic distributions of our estimators. Then  we have some application examples in order to illustrate our methodology (Section 5). Finally, we do  discussion in Section 6 to explain our various examples and see the lesson we can draw threreof.

\section{Description of divergence measures}
 \par \hspace{0.5cm} Let $(\cal X, \cal A,\lambda)$ be an arbitrary measure space with
 $\lambda$ being a finite or $\sigma-$finite measure. Let also
 $\mu_1,\mu_2$ probability measures on $ \cal X$ such that
 $\mu_1,\mu_2 \ll \lambda$ (absolutely continuous).\\
 Denote the Randon-Nikodym derivatives (densities) of $\mu_i$ with
 respect to $\lambda$ by $p_i(x)$:
 $$ p_i(x)=\frac{\mu_i(dx)}{\lambda(dx)}, \quad i=1,2. $$
 
 


\begin{definition} \label{h1}
 Kullback-Leibler’s relative divergence (also called relative entropy) between two probability measures $\mu_1$ , $\mu_2$ is defined by

\begin{equation}
  K(\mu_1,\mu_2)=\int_{\cal X}p_1(x)\log \big(
 \frac{p_1(x)}{p_2(x)} \big) \lambda (dx) = \mathbb{E}_{\mu_1}\left[ \log \frac{p_1(x)}{p_2(x)} \right] 
 \end{equation}
 We can also write $K(p_1,p_2)$.
\end{definition}
\par\hspace{1cm} We will focus on in this article to a particular family of \emph{Csisz\'ar $\phi$-divergence} that is the family of \emph{$\alpha$-divergence} measures. Thus we have the $\alpha$-divergence \cite{Cichocki2008}, the R\'enyi $\alpha$-divergence and the Tsallis $\alpha$-divergence of the discrepancy which we will work.
\begin{definition}[$\alpha$-divergence]
\par The $\alpha$-divergence  is defined by
 \begin{equation}
 D_{\alpha}(\mu_1,\mu_2)=\frac{1}{\alpha(1-\alpha)}\Big(1-\int_{\mathcal{X}}p_1^\alpha(x)p_2^{1-\alpha}(x)
 \lambda(\mathrm{d}x)\Big),\, \alpha > 0 \ \hbox{and} \ \alpha \neq 1.
\end{equation}
\end{definition}
\begin{definition}[R\'enyi $\alpha$-divergence]

\par R\'enyi (1961) for the first time gave one generalization of the relative entropy given in \ref{h1}. It is defined by
\begin{equation}
 R_\alpha(\mu_1,\mu_2) =\frac{1}{\alpha-1} \log\int_{\cal X}p_1^{\alpha}(x)p_2^{1-\alpha}(x)\lambda (dx)
,
 \, \alpha > 0 \ \hbox{and} \ \alpha \neq 1.
 \end{equation}

\end{definition} 
 
\begin{definition}[Tsallis $\alpha$-divergence]
\par Tsallis $\alpha$-divergence is defined by
 \begin{equation}
T_\alpha(\mu_1,\mu_2)=\frac{1}{\alpha-1}\Big(\int_\mathcal{X}p_1^\alpha(x)p_2^{1-\alpha}(x)
 \lambda(\mathrm{d}x)-1\Big), \, \alpha > 0 \ \hbox{and} \ \alpha \neq 1
 \end{equation}
\end{definition} 

\hspace{1cm}Chauveau and Vandekerkhove \cite{Chauv2007} studied the strategies of simulation with Kullback-Leibler divergence. Their study was the comparison of simulation strategies by the Kullback-Leibler divergence which allowed to choose the best among  candidate strategies. They used the Metropolis-Hastings (MH) algorithm to explain their method. Thus, with a target density f, proposal density q and an initial distribution $p_i^0$ corresponding to the strategy \textit{i} of MH algorithm they assumed several conditions on these densities to provide certain regularity properties, as the Lipschitz property, on the successive densities $p_i^n$ of strategy \textit{i}. This allowed to obtain an consistent estimate of the entropy
\[ \mathcal{H}(p_i^n)=\int_{\mathcal{X}}p_i^n(x)\log(p_i^n(x))\lambda(\mathrm{d}x) \]

of $p_i^n$ at time $n$.\textsc{}

\hspace{1cm}In this present paper we propose to study how to find the optimal stochastic simulation algorithm  may come from the MH methods (IS and RWMH), Gibbs sampler, or recent adaptive (AM) and hybrid \small{(MWG)} methods. For this, we will use  our three divergence measures. We see that we can find the Kullback-Leibler divergence if for each divergence, the parameter $\alpha$ tends to 1. The interest of  these divergences is that its subtracts the target density $f$, the proposal density $q$ and initial density $p_i^0$  many assumptions like in Chauveau and Vandekerkhove \cite{Chauv2012}. 

\section{Convergence and estimators of divergence measures}
\subsection{Convergence of $\alpha$-divergence measures}
\par\hspace{1cm} We will use a result of Holden (1998) to show the geometric convergence of the MH algorithm under a minoration condition: if there exists $a\in (0,1)$ such that $q(y|x)\geq a f(y)$ for all $x, y\in \mathcal{X}$, then
\begin{equation}\label{h2}
\forall y\in \mathcal{X}, \Big\vert\frac{p^n(y)}{f(y)}-1\Big\vert\leq(1-a)^n\Big\Vert\frac{p^0}{f}
-1\Big\Vert_\infty
\end{equation} 

\begin{theorem}\label{h3} 
If the proposal density of the Metropolis-Hastings algorithm satisfies $q(y|x)\geq \delta f(y)$, for all $x, y\in\mathcal{X}$, and $\delta\in(0,1)$, then with $\alpha\in(1,+\infty)$
\begin{equation}
D_{\alpha}(p^n,f)\leq\frac{1}{\alpha(1-\alpha)}(1-(r\nu^n + 1)^\alpha),
\end{equation}
\begin{equation}
R_\alpha(p^n,f)\leq \frac{\alpha}{\alpha -1}r\nu^n
\end{equation}
\begin{equation}
T_\alpha(p^n,f)\leq\frac{1}{\alpha-1}\big((r\nu^n +1)^\alpha -1\big)
\end{equation}
where $r= \| \frac{p^0}{f}-1 \|_\infty > 0, $ and  $\nu=(1-\delta).$ 
\end{theorem}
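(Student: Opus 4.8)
The plan is to derive all three bounds from the single geometric estimate \eqref{h2}, since each of $D_\alpha$, $R_\alpha$, $T_\alpha$ is built from the same integral $I_n:=\int_{\mathcal X}(p^n)^\alpha f^{1-\alpha}\,\lambda(dx)=\int_{\mathcal X}\bigl(p^n/f\bigr)^\alpha f\,\lambda(dx)=\mathbb{E}_f\bigl[(p^n/f)^\alpha\bigr]$. The first step is to rewrite \eqref{h2} with the stated notation $r=\|p^0/f-1\|_\infty$ and $\nu=1-\delta$ as the two-sided pointwise bound
\begin{equation*}
1-r\nu^n\;\le\;\frac{p^n(y)}{f(y)}\;\le\;1+r\nu^n\qquad\text{for all }y\in\mathcal X .
\end{equation*}
Then, because $\alpha>1$, the map $t\mapsto t^\alpha$ is increasing on $[0,\infty)$, so raising the upper bound to the power $\alpha$ and integrating against $f$ (a probability density, so $\int f\,\lambda(dx)=1$) gives $I_n\le (r\nu^n+1)^\alpha$. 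Substituting this into the definitions immediately yields the $R_\alpha$ and $T_\alpha$ estimates: for $T_\alpha$ we have $T_\alpha(p^n,f)=\tfrac{1}{\alpha-1}(I_n-1)\le\tfrac{1}{\alpha-1}\bigl((r\nu^n+1)^\alpha-1\bigr)$ directly; for $R_\alpha$ we have $R_\alpha(p^n,f)=\tfrac{1}{\alpha-1}\log I_n\le\tfrac{1}{\alpha-1}\log(r\nu^n+1)^\alpha=\tfrac{\alpha}{\alpha-1}\log(1+r\nu^n)$, and then one applies $\log(1+x)\le x$ for $x\ge0$ to replace $\log(1+r\nu^n)$ by $r\nu^n$, obtaining the claimed $\tfrac{\alpha}{\alpha-1}r\nu^n$.

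For the $D_\alpha$ bound one has to be a little more careful about signs, since $\alpha(1-\alpha)<0$ when $\alpha>1$. Write $D_\alpha(p^n,f)=\tfrac{1}{\alpha(1-\alpha)}(1-I_n)$. Here I would use the \emph{lower} bound on $p^n/f$: since $1-r\nu^n$ may be negative we cannot simply raise it to the $\alpha$th power, but note that $I_n=\mathbb E_f[(p^n/f)^\alpha]\ge 0$ trivially, and more usefully, to match the stated right-hand side $\tfrac{1}{\alpha(1-\alpha)}(1-(r\nu^n+1)^\alpha)$ we only need $I_n\le(r\nu^n+1)^\alpha$, which we already proved; then $1-I_n\ge 1-(r\nu^n+1)^\alpha$, and dividing by the negative quantity $\alpha(1-\alpha)$ reverses the inequality to give $D_\alpha(p^n,f)\le\tfrac{1}{\alpha(1-\alpha)}(1-(r\nu^n+1)^\alpha)$, as claimed. (One should check the right-hand side is indeed nonnegative: $(r\nu^n+1)^\alpha\ge1$ so $1-(r\nu^n+1)^\alpha\le0$, divided by $\alpha(1-\alpha)<0$ gives something $\ge0$, consistent with $D_\alpha\ge0$.)

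The main obstacle is not any deep estimate but rather verifying that the single monotone step $u\le v\implies u^\alpha\le v^\alpha$ is legitimately applied only to nonnegative quantities — i.e. that we never raise the possibly-negative lower bound $1-r\nu^n$ to the power $\alpha$ — and keeping track of which direction each inequality flips when dividing by $\alpha(1-\alpha)$ or by $\alpha-1$. A secondary point worth a sentence is that the integral $I_n$ is finite and well-defined: this follows because $p^n/f$ is bounded above by $1+r$ (take $n=0$ in the two-sided bound, or use that $r<\infty$ is assumed), so $(p^n/f)^\alpha$ is bounded and hence $f$-integrable. Once those bookkeeping points are in place, the three inequalities drop out in a few lines each, with the $R_\alpha$ case additionally invoking the elementary $\log(1+x)\le x$.
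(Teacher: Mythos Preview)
Your proposal is correct and follows essentially the same route as the paper: both arguments use Holden's bound \eqref{h2} to get the pointwise estimate $p^n/f\le 1+r\nu^n$, raise to the power $\alpha$ by monotonicity, integrate against $f$ to obtain $I_n\le(r\nu^n+1)^\alpha$, and then substitute into each divergence while tracking the sign of the prefactor (with $\log(1+x)\le x$ for the R\'enyi case). Your write-up is in fact a bit more careful than the paper's about the sign of $\alpha(1-\alpha)$ and about not applying $t\mapsto t^\alpha$ to the possibly-negative lower bound $1-r\nu^n$; the paper handles the latter implicitly by passing through the intermediate inequality $p^n/f\le |p^n/f-1|+1$, which amounts to the same thing.
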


\begin{proof}
 {\bf Proof. }
\par \textit{a) $\alpha$-divergence $D_\alpha$} 

\begin{equation}
  D_\alpha(p^n,f)    = \frac{1}{\alpha(1-\alpha)}\Big(1-\int_{\mathcal{X}} \big(\frac{p^n(y)}{f(y)}-1+1\big
)^\alpha f(y)\lambda(\mathrm{d}y) \Big)
\end{equation}

with $\alpha\in]1,+\infty[$, we have 
\[ 1-\int_{\mathcal{X}}\big(\frac{p^n(y)}{f(y)}-1+1\big)^\alpha f(y)\lambda(\mathrm{d}y) \geq
1-\int_{\mathcal{X}}\big(\big\vert\frac{p^n(y)}{f(y)}-1\big\vert+1\big)^\alpha f(y)\lambda(\mathrm{d}y)\]
Therefore  using \eqref{h2} we have 
\begin{eqnarray}
D_\alpha(p^n,f) & \leq & \frac{1}{\alpha(1-\alpha)}\Big(1-\int_{\mathcal{X}}\big(\big\vert\frac{p^n(y)}{f(y)}-1\big\vert +1\big)^\alpha f(y)\lambda(\mathrm{d}y)\Big)\nonumber\cr
                & \leq & \frac{1}{\alpha(1-\alpha)}\Big(1-\int_{\mathcal{X}}(r\nu^n +1)^\alpha f(y)\lambda(\mathrm{d}y) \Big)\nonumber\cr
                & \leq & \frac{1}{\alpha(1-\alpha)}\big(1-(r\nu^n + 1)^\alpha \big)
\end{eqnarray}
\textit{b) R\'enyi $\alpha$-divergence}
\par We will use \eqref{h2} and the same procedure give us
\begin{eqnarray}
R_\alpha(p^n,f) & = & \frac{1}{\alpha -1}\log\int_\mathcal{X}\big(\frac{p^n(y)}{f(y)}\big)^\alpha f(y)\lambda(\mathrm{d}y)\nonumber\cr
                          & \leq & \frac{\alpha}{\alpha -1}\log(r\nu^n +1)\nonumber\cr
                          & \leq & \frac{\alpha}{\alpha -1}r\nu^n
\end{eqnarray}
\textit{c) Tsallis $\alpha$-divergence}
\par Here too we have
\begin{eqnarray}
T_\alpha(p^n,f) & = & \frac{1}{\alpha -1}\Big(\int_\mathcal{X}\big(\frac{p^n(y)}{f(y)}\big)^\alpha f(y)\lambda(\mathrm{d}y)-1\Big)\nonumber\cr
                & \leq & \frac{1}{\alpha -1}\big(\big(r\nu^n +1\big)^\alpha -1)\big)
\end{eqnarray}

\end{proof}

\hspace{1cm}These three divergence measures are all positive, we see that those converge to zero under the conditions of the Theorem \ref{h3}. This allows us to see that the densities $p^n$ of random variables $X_n$ of the Markov chain converge to the stationary distribution $f$ when $n$ goes to infinity.

\subsection{Estimators for the three divergence measures}
\par\hspace{1cm}We first make an estimator of $\alpha$-divergence $D_{\alpha}$, in the same way we can have an estimator of divergences $T_\alpha$ and R\'enyi divergence $R_\alpha$. Suppose $\lambda$ is the Lebesgue measure on $\mathcal{X}=\mathbb{R}^d$. Then we have

\[ D_\alpha(p^n,f)=\frac{1}{\alpha(1-\alpha)}\Big(1-\int\Big(\frac{f(x)}{p^n(x)}\Big)^{
1-\alpha}p^n(x)\mathrm{d}x\Big) \]
\hspace{1cm}We can initially think to write the integral like a mathematical expectancy and apply the method of Monte Carlo integration.
We would have then

\[D_\alpha(p^n,f)=\frac{1}{\alpha(1-\alpha)}\Big(1-\mathbb{E}\Big(\big(\frac{f(X)}{p^n(X)}\big)^{1-\alpha}\Big) \Big)\]

Using the strong law of large numbers we will have the following estimator
\[ \hat{D}_\alpha(p^n,f)=\frac{1}{\alpha(1-\alpha)}\Big(1-\frac{1}{N}\sum_{i=1}^N\big(\frac{f(X_i)}{p^n(X_i)}\big)^{1-\alpha}\Big) \]
which converges almost surely to $D_\alpha(p^n,f)$.
\par\hspace{1cm} But very often encountered in practice, cases where the density $f$ and the densities $p^n$ are not analytically known. There are also cases where $f$ is known to a multiplicative constant: the Bayesian context.
\par\hspace{1cm} It is often difficult to obtain the analytical form of $p^n$. If $f$ is known analytically, it is not useful to estimate it. But let us consider the more general case where $p^n$ and $f$ are not analytically known . Then we use the methods of nonparametric estimation of probability densities. P\'oczos and Schneider \cite{Poczos2012} then proposed in their paper  an estimator of these densities based on \emph{k-NN method}(k nearest neighbor).

\subsubsection{k-NN based density estimator}
\par\hspace{1cm}This is a method of density estimation proposed by Loftsgaarden and Quesenberry \cite{Loftsgaarden1965}. It has a set of data training $E=\left\{x_1,x_2,\ldots,x_N \right\}$  where $x_i$ are independent observations with common probability density function $g$. For a data $z$, the problem is the estimation of a probability density $g$ at $z$. The principle is to find  among $x_i$ the kth nearest neighbor of $z$ considering euclidean distance and we are interested in the distance between the two points ($x_{i,k}$ and $z$). If $x_{i,k}$ and $z$ are points of $\mathbb{R}^d$ this distance $d(z,x_{i,k}) $ allows us to calculate the volume of the hypersphere with center $z$ and radius $r=d(z,x_{i,k})$.
\par The proposed estimator is as follows
\[ \hat{g}_{k,N}(z)=\frac{k/N}{V(Hs(z,d(z,x_{i,k})))}=\frac{k/N}{\pi^{d/2}r^d/\Gamma((d/2) +1)} \]
where $V(Hs(z,d(z,x_{i,k})))=\pi^{d/2}r^d/\Gamma((d/2) +1)$ is the volume of d-dimensional ball around $z\in\mathbb{R}^d$ with radius $r>0$, $\Gamma(.)$ is the Gamma function. Loftsgaarden and Quesenberry (1965) suggested after several experiments to take $k$ equals to $\sqrt{N}$. We will round it if necessary to the nearest whole.

\par  We find in \cite{Poczos2011} theorems of convergence for density's estimators (k-NN estimators), showing the consistency of these estimators.

\begin{theorem}[k-NN density estimators, convergence in probability]

Let $k(n)$ an integer that denotes the $k(n)$th nearest neighbor in the sample of size n. If 
$\lim_{n\rightarrow\infty} k(n)=\infty$, and $\lim_{n\rightarrow\infty}n/k(n)=\infty$, then 
$\hat{g}_{k(n),N}(x)\longrightarrow_p g(x)$ for almost all $x$.
\end{theorem}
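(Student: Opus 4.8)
The plan is to prove pointwise weak consistency at every Lebesgue point $x$ of $g$, which by the Lebesgue differentiation theorem is almost every $x$. Throughout I write $n$ for the sample size (the $N$ of the statement), $k=k(n)$, $B(x,r)=\{y\in\mathbb{R}^d:\|y-x\|\le r\}$, $V_d=\pi^{d/2}/\Gamma(d/2+1)$, $\mu(dy)=g(y)\,dy$ for the common law of the $X_i$, and $R_{k,n}(x)$ for the distance from $x$ to its $k$-th nearest sample point, so that $\hat g_{k,n}(x)=(k/n)/(V_d R_{k,n}(x)^d)$. Fixing such an $x$ with $g(x)>0$ (the case $g(x)=0$ is easier: on the open set where $g$ vanishes near $x$ one has $R_{k,n}(x)$ bounded away from $0$ while $k/n\to0$, so $\hat g_{k,n}(x)\to0=g(x)$, and the remaining $g(x)=0$ Lebesgue points are covered by the argument below with right-hand limit $0$), I would write
\[
\hat g_{k,n}(x)=\frac{k/n}{\mu\big(B(x,R_{k,n}(x))\big)}\;\cdot\;\frac{\mu\big(B(x,R_{k,n}(x))\big)}{V_d\,R_{k,n}(x)^d}
\]
and show the first factor converges to $1$ and the second to $g(x)$, in probability; Slutsky's theorem then finishes it.

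For the first factor I would use the probability integral transform. Since $g$ is a density, $F_x(r):=\mu(B(x,r))$ is continuous and nondecreasing and the $\|X_i-x\|$ are almost surely distinct and positive, so $U_i:=F_x(\|X_i-x\|)$ are i.i.d.\ $\mathrm{Uniform}(0,1)$ and, by monotonicity of $F_x$, $F_x\big(R_{k,n}(x)\big)=U_{(k)}$, the $k$-th order statistic of $U_1,\dots,U_n$. Since $U_{(k)}\sim\mathrm{Beta}(k,n-k+1)$, one has $\mathbb{E}\,U_{(k)}=k/(n+1)$ and $\mathrm{Var}\big((n/k)U_{(k)}\big)\le 1/k$, so the hypotheses $k\to\infty$ and $k/n\to0$ together with Chebyshev's inequality give
\[
\frac{\mu\big(B(x,R_{k,n}(x))\big)}{k/n}=\frac{U_{(k)}}{k/n}\ \longrightarrow_{p}\ 1,
\]
which is exactly the reciprocal of the first factor.

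For the second factor I would first show $R_{k,n}(x)\longrightarrow_p 0$: for any $\varepsilon>0$, since $x$ is a Lebesgue point with $g(x)>0$ we have $F_x(\varepsilon)>0$, and the event $\{R_{k,n}(x)>\varepsilon\}$ equals $\{\text{fewer than }k\text{ of the }X_i\text{ fall in }B(x,\varepsilon)\}=\{\mathrm{Binomial}(n,F_x(\varepsilon))<k\}$, whose probability tends to $0$ because $nF_x(\varepsilon)/k\to\infty$. Then the Lebesgue differentiation theorem says the function $\psi(r):=F_x(r)/(V_d r^d)$ satisfies $\psi(r)\to g(x)$ as $r\downarrow0$, so composing with $R_{k,n}(x)\longrightarrow_p0$ gives $\psi(R_{k,n}(x))\longrightarrow_p g(x)$, i.e.\ the second factor tends to $g(x)$. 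Multiplying the two limits completes the proof.

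The step I expect to be the main obstacle is the interplay between the randomness of the radius $R_{k,n}(x)$ and the sample used to count the points inside $B(x,R_{k,n}(x))$: a naive law-of-large-numbers argument for $n\mu_n\big(B(x,R_{k,n}(x))\big)$ would be circular, since the ball itself depends on the data. The probability integral transform $U_i=F_x(\|X_i-x\|)$ is the device that breaks this circularity, turning the random mass $\mu\big(B(x,R_{k,n}(x))\big)$ into an honest uniform order statistic with explicit first two moments. A secondary point needing care is the ``almost all $x$'' clause, which is precisely the Lebesgue-null exceptional set of the differentiation theorem, together with the harmless separate treatment of points where $g$ vanishes in a neighbourhood.
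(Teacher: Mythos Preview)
Your proof is correct and follows the classical route for $k$-NN consistency (essentially the Loftsgaarden--Quesenberry argument, as later streamlined via uniform order statistics). However, there is nothing to compare it against: the paper does not supply a proof of this theorem. It is quoted, without proof or sketch, from P\'oczos and Schneider (2011) purely as a supporting fact to justify plugging $k$-NN density estimates into the divergence estimators; the companion a.s.\ sup-norm theorem that immediately follows it is likewise stated without proof.

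Two minor remarks on your argument itself. First, your handling of the case $g(x)=0$ is slightly more fragmented than necessary: the main two-factor decomposition already works verbatim at any Lebesgue point $x$ with $F_x(\varepsilon)>0$ for every $\varepsilon>0$, since the Beta computation for the first factor never uses $g(x)>0$ and the second factor still tends to $g(x)=0$ by Lebesgue differentiation; only points outside the support of $\mu$ genuinely need the direct ``$R_{k,n}(x)$ bounded below, $k/n\to0$'' observation. Second, the identity $F_x(R_{k,n}(x))=U_{(k)}$ needs only that $F_x$ is nondecreasing, and almost-sure distinctness of the $U_i$ already follows from their uniformity, so your ``distinct and positive'' clause on the $\|X_i-x\|$ is not strictly required. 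Both points are cosmetic; the argument as written is sound.
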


\begin{theorem}[Almost sure convergence in sup norm]

 If\\  $\lim_{n\rightarrow\infty}k(n)/\log(n)=\infty$ and $\lim_{n\rightarrow\infty}n/k(n)=\infty$, then 
\[ \lim_{n\rightarrow\infty}\sup_x|\hat{g}_{k(n),N}(x)-g(x)|=0, \quad\textrm{almost surely.}\] 

\end{theorem}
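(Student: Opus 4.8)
The plan is to follow the classical route to strong uniform consistency of nearest neighbor density estimates (as in the work of Devroye and Wagner). Write $n$ for the sample size and $k=k(n)$, let $\mu$ be the probability measure with density $g$, let $\mu_n$ be the empirical measure of $x_1,\dots,x_n$, put $V_d=\pi^{d/2}/\Gamma(d/2+1)$ for the volume of the unit ball, and let $R_{k,n}(x)$ denote the distance from $x$ to its $k$th nearest neighbor in the sample, so that $\hat g_{k,n}(x)=\frac{k/n}{V_d\,R_{k,n}(x)^{d}}$. Since a genuinely \emph{uniform} conclusion is sought, I assume, as the statement tacitly requires, that $g$ is bounded and uniformly continuous.

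First I would reduce everything to the behaviour of the empirical measure on Euclidean balls. By definition of the $k$th nearest neighbor, the closed ball $B_x:=\bar B\!\left(x,R_{k,n}(x)\right)$ satisfies $\mu_n(B_x)=k/n$ almost surely (the sample points being a.s.\ distinct under the continuous law $\mu$), so that
\[
\hat g_{k,n}(x)=\frac{\mu_n(B_x)}{\mu(B_x)}\cdot\frac{\mu(B_x)}{V_d\,R_{k,n}(x)^{d}} .
\]
The second factor equals $g(x)+o(1)$ with the $o(1)$ uniform in $x$ whenever $R_{k,n}(x)\to0$ uniformly, which in turn follows from $n/k(n)\to\infty$ (this forces $k/n\to0$, and $\sup_x\mu(\bar B(x,\rho))\to0$ as $\rho\to0$ because $g$ is bounded), using the modulus of continuity of $g$. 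Hence it is enough to show that the \emph{ratio} $\mu_n(B_x)/\mu(B_x)$ tends to $1$ uniformly in $x$, with separate bookkeeping where $g(x)$ is close to $0$; see the last step.

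The key step, and the one that uses the hypothesis $k(n)/\log n\to\infty$, is a uniform \emph{relative} deviation bound for $\mu_n$ over the class $\mathcal B$ of all closed Euclidean balls of $\mathbb R^{d}$. This class is a Vapnik-Chervonenkis class of dimension $d+1$, so its shatter coefficients are polynomial in $n$, and the relative Vapnik-Chervonenkis inequality gives, for each $\varepsilon\in(0,1)$ and each threshold $s>0$,
\[
\mathbb P\!\left(\ \sup_{B\in\mathcal B,\ \mu(B)\ge s}\ \frac{\left|\mu_n(B)-\mu(B)\right|}{\mu(B)}>\varepsilon\right)\ \le\ c_1\,n^{d+1}\,e^{-c_2\,n\,s\,\varepsilon^{2}},
\]
with absolute constants $c_1,c_2>0$. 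Choosing $s$ of order $k/n$ makes the exponent of order $k\varepsilon^{2}$, so the right-hand side is summable in $n$ \emph{precisely} because $k/\log n\to\infty$; Borel-Cantelli then yields $\sup_{B:\,\mu(B)\ge c\,k/n}\left|\mu_n(B)/\mu(B)-1\right|\to0$ almost surely for each fixed $c>0$, and a diagonalisation over $\varepsilon\downarrow0$ produces the ratio statement needed above. It is worth stressing that the plain Glivenko-Cantelli-type bound $\sup_B|\mu_n(B)-\mu(B)|\to0$ is of no use here, because the relevant balls $B_x$ carry $\mu$-mass only of order $k/n\to0$: the ratio form is essential.

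Finally I would assemble the estimate by squeezing $R_{k,n}(x)$. Fix $\eta>0$ and $\varepsilon\in(0,1)$; for each $x$ with $g(x)\ge\eta$ choose radii $r_-(x)\le r_+(x)$ with $\mu(\bar B(x,r_\pm(x)))=(1\pm\varepsilon)k/n$, which by boundedness and uniform continuity of $g$ tend to $0$ uniformly over such $x$. The relative bound then forces, almost surely for all large $n$ and all such $x$ at once, $\mu_n(\bar B(x,r_+(x)))>k/n>\mu_n(\bar B(x,r_-(x)))$, whence $r_-(x)\le R_{k,n}(x)\le r_+(x)$; substituting into the formula for $\hat g_{k,n}$ and using $\mu(\bar B(x,r_\pm(x)))=V_d\,r_\pm(x)^{d}\,(g(x)+o(1))$ gives
\[
\frac{g(x)+o(1)}{1+\varepsilon}\ \le\ \hat g_{k,n}(x)\ \le\ \frac{g(x)+o(1)}{1-\varepsilon}
\]
uniformly over $\{g\ge\eta\}$, so letting $n\to\infty$ and then $\varepsilon\downarrow0$ gives uniform convergence there. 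For $x$ with $g(x)<\eta$ (in particular $x$ outside the support of $\mu$) one checks directly, again from the same relative inequality applied to balls of small $\mu$-mass, that $\hat g_{k,n}(x)$ is eventually uniformly $O(\eta)$, hence $|\hat g_{k,n}(x)-g(x)|=O(\eta)$ there; letting $\eta\downarrow0$ then finishes the proof. The hard part is really this relative Vapnik-Chervonenkis inequality together with the bookkeeping around balls whose $\mu$-mass vanishes at the rate $k/n$; handling points near the boundary of the support in the last step is the remaining technical nuisance.
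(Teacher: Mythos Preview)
The paper does not prove this theorem at all: it is quoted verbatim as a known result from the literature (attributed there to P\'oczos and Schneider, who in turn refer back to the classical work of Devroye and Wagner on $k$-NN density estimation). So there is no ``paper's own proof'' to compare against.

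Your argument follows precisely the classical Devroye--Wagner route and is essentially sound. A couple of small points of presentation are worth tightening. First, the early claim that ``$R_{k,n}(x)\to0$ uniformly in $x$'' is not true on all of $\mathbb R^{d}$ (points far from the support have arbitrarily large $R_{k,n}$); you implicitly correct this later by splitting on $\{g\ge\eta\}$ versus $\{g<\eta\}$, but the initial paragraph should already be restricted to the former set to avoid an apparent inconsistency. Second, in the last step for $\{g<\eta\}$, the claim that $\hat g_{k,n}(x)$ is eventually uniformly $O(\eta)$ needs a one-line justification: pick a radius $r(x)$ with $\mu(\bar B(x,r(x)))=(1+\varepsilon)k/n$; the relative VC bound gives $\mu_n(\bar B(x,r(x)))\ge k/n$, hence $R_{k,n}(x)\ge r(x)$, and since on $\{g<\eta\}$ one has (by uniform continuity) $\mu(\bar B(x,r(x)))\le C\eta\,V_d r(x)^{d}$ for $r(x)$ small, this yields $\hat g_{k,n}(x)\le C'\eta$. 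With these two clarifications the proof is complete, and it is exactly the argument the cited references carry out.
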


\par\textit{a) Application to $p^n$}
\par Let $X_{1:N}=(X_1,...,X_N)$ be a simulated sample according to the law with density $p^n$ with $X_i$ i.i.d. We choose one $X_i$ on the sample. Let $\rho_{k,i}$ the euclidean distance between $X_i$ and its kth nearest neighbor on the sample. 
\par The estimate of $p^n$ at $X_i$ is
\begin{equation}
\hat{p}_{k,n,N}(X_i)=\frac{k/(N-1)}{V(Hs(X_i,\rho_{k,i}))}
\end{equation}

$ V(Hs(X_i,\rho_{k,i}))=c\,\rho_{k,i}^d $, where $c$ stands for the volume of a d-dimensional unit ball.\par
Hence
\begin{equation}
\hat{p}_{k,n,N}(X_i)=\frac{k}{(N-1)c\,\rho_{k,i}^d}
\end{equation}
\par\textit{b) Application to $f$} 
\par \hspace{1cm}We should also have the i.i.d. simulations according to the density $f$. But the principle in this article is that we don't know how to do i.i.d. simulations directly from $f$. Then we apply the  MCMC simulation algorithms. After several iterations (large $n$) we can begin to recover samples. However, we take the precaution that two successive samples are separated by $n_0$ iterations. We get now the samples $Y_{1:M}=(Y_1,...,Y_M)$ of random variables which are 'independent' with density $f$. Let $\gamma_{k,i}$ the euclidean distance between $X_i$ and its kth nearest neighbor on the sample $Y_{1:M}$. 
\par The estimator of $f$ at $X_i$ is

\begin{equation}
\hat{f}_{k,N}(X_i)=\frac{k}{M\,c\,\gamma_{k,i}^d}
\end{equation}

\subsubsection{Construction of estimators}
\par\hspace{1cm} P\'oczos and Schneider (2011) proposed an estimator of the integral part that is common to the three divergence measures that we have presented. However, their first estimator is asymptotically biased. They showed that multiplying by a constant, they obtained an asymptotically unbiased estimator.
Starting from it we construct our estimators for the $\alpha$-divergence $D_\alpha$, the R\'enyi $\alpha$-divergence $\mathcal{R}_\alpha$ and the Tsallis $\alpha$-divergence $T_\alpha$.

\par\textit{a) Estimator for $\alpha$-divergence $D_\alpha$}
\par\hspace{1cm}Begin by noting  $M_\alpha(p^n,f)=\int (p^n(x))^\alpha (f(x))^{1-\alpha}\mathrm{d}x $ common integral part of the three divergence measures.
\par Once the densities $p^n$ and $f$ estimated at $X_i$ , we have the following estimator of $\hat{D}_{\alpha,N}(p^n,f)$.

\begin{eqnarray}
\hat{D}_{\alpha,N}(p^n,f) & = &\frac{1}{\alpha(1-\alpha)}\Big(1-\frac{1}{N}\sum_{i=1}^N
\Big(\frac{\hat{f}_{k,N}(X_i)}{\hat{p}_{k,n,N}(X_i)}\Big)^{1-\alpha}\Big)\nonumber\cr
& = & \frac{1}{\alpha(1-\alpha)}\Big(1-\frac{1}{N}\sum_{i=1}^N\Big(\frac{(N-1)\rho_{k,i}^d}{M \gamma_{k,i}^d}\Big)^{1-\alpha}\Big)
\end{eqnarray}
We have now
\[\hat{M}_{\alpha,N}(p^n,f)=  \frac{1}{N}\sum_{i=1}^N\Big(\frac{(N-1)\rho_{k,i}^d}{M\gamma_{k,i}^d}\Big)^{1-\alpha} \]
that is an asymptotically biased estimator of $M_\alpha(p^n,f)$
 (P\'oczos and Schneider (2011)). When we multiply it by a constant
\begin{equation}
B_{k, \alpha}=\frac{(\Gamma(k))^2}{\Gamma(k-\alpha +1)\Gamma(k+\alpha -1)} 
\end{equation} 

independent of $ p^n $ and $ f $, with $\Gamma(.)$ the Gamma function. We obtain an asymptotically unbiased estimator.

\par We obtain now a new asymptotically unbiased estimator of $D_\alpha(p^n,f)$

\begin{equation}
\hat{D}_{\alpha ,N,k}(p^n,f)=\frac{1}{\alpha(1-\alpha)}\Big(1-\frac{1}{N}\sum_{i=1}^N\Big(\frac{(N-1)\rho_{k,i}^d}{M\gamma_{k,i}^d}\Big)^{1-\alpha}
\times B_{k, \alpha}\Big)
\end{equation}

\par\textit{b) Estimator for Tsallis $\alpha$-divergence}
\par \hspace{1cm}Applying the same process we obtain an asymptotically unbiaised estimator   $\hat{T}_{\alpha,N,k}(p^n,f)$ of $T_\alpha(p^n,f)$

\begin{equation}
\hat{T}_{\alpha,N,k}(p^n,f)=\frac{1}{\alpha -1}\Big(\frac{1}{N}\sum_{i=1}^N\Big(\frac{(N-1)\rho_{k,i}^d}{M\gamma_{k,i}^d}\Big)^{1-\alpha}
\times B_{k, \alpha}-1\Big)
\end{equation}

\par\textit{c) Estimator for R\'enyi divergence}
\par \hspace{1cm}The R\'enyi measure divergence has a characteristic due to the presence of the logarithmic function that takes $M_\alpha(p^n,f)$ as argument. We have not shown in this paper that its estimator is unbiased  or asymptotically unbiased. However, we propose an estimator 
\begin{equation}
\hat{R}_{\alpha,N,k}(p^n,f)=\frac{1}{\alpha -1}\log\Big(\frac{1}{N}\sum_{i=1}^N\Big(\frac{(N-1)\rho_{k,i}^d}{M\gamma_{k,i}^d}\Big)^{1-\alpha}\times B_{k, \alpha}\Big)
\end{equation}

\par \hspace{1cm}Now give some theorems which show that $\hat{M}_{\alpha,N}(p^n,f)B_{k,\alpha}$ is asymptotically unbiased (P\'oczos and Schneider \cite{Poczos2011}). Define first the following function 

\begin{equation}
F(x,p,\delta,\omega)=\sum_{j=1}^{k-1}\Big(\frac{1}{j!}\Big)^\omega \Gamma(\gamma +j\omega)\Big(
\frac{p(x)+\delta}{p(x)-\delta}\Big)^{j\omega}\big(p(x)-\delta\big)^{-\gamma}\big((1-\delta)\omega\big)^{
-\gamma-j\omega}
\end{equation}

\begin{theorem}[Asymptotic unbiasedness]\label{h4}
 Assume that\\ $0<\gamma=1-\alpha<k$, $p$ is bounded away from 0, $p$ is uniformly Lebesgue approximable, $\exists\delta_0$ such that $\forall\delta\in (0,\delta_0),\:\int F(x,p,\delta,1)p(x)\mathrm{d}x<\infty$,\\
$\int\Vert x-y\Vert^\gamma p(y)\mathrm{d}y<\infty$ for almost all $x\in\mathbb{R}^d$, $\int\int\Vert
x-y\Vert^\gamma p(y)p(x)\mathrm{d}y\mathrm{d}x<\infty$ and that $q$ is bounded from above. Then 
\begin{equation}
\lim_{n,m\rightarrow\infty}\mathbb{E}\Big(\hat{M}_{\alpha ,N}(p^n,f)B_{k,\alpha}\Big)= M_\alpha(p^n,f)
\end{equation}
i.e., the estimator is asymptotically unbiased.

\end{theorem}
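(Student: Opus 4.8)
\textbf{Proof (outline).}
The idea is to reduce the claim to the pointwise limit law for $k$-nearest-neighbour distances --- which is the content of the lemmas of P\'oczos and Schneider \cite{Poczos2011} --- and then to justify interchanging that limit with a conditional expectation and with the outer integral in $x$. Write $\gamma=1-\alpha$, so that $0<\gamma<k$ by hypothesis, and recall
\[
\hat{M}_{\alpha,N}(p^n,f)\,B_{k,\alpha}=\frac{B_{k,\alpha}}{N}\sum_{i=1}^N\Big(\frac{\hat{f}_{k,N}(X_i)}{\hat{p}_{k,n,N}(X_i)}\Big)^{\gamma}=\frac{B_{k,\alpha}}{N}\sum_{i=1}^N\Big(\frac{(N-1)\,c\,\rho_{k,i}^{d}}{M\,c\,\gamma_{k,i}^{d}}\Big)^{\gamma},
\]
the volume constants $c$ cancelling. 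First I would use that $X_{1:N}$ is i.i.d.\ with density $p^n$ and independent of the sample $Y_{1:M}$ (i.i.d.\ with density $f$): by exchangeability of the summands,
\[
\mathbb{E}\big[\hat{M}_{\alpha,N}(p^n,f)\,B_{k,\alpha}\big]=B_{k,\alpha}\int\psi_{N,M}(x)\,p^n(x)\,\mathrm{d}x,\qquad\psi_{N,M}(x):=\mathbb{E}\Big[\Big(\frac{(N-1)\,c\,\rho_k^{d}}{M\,c\,\gamma_k^{d}}\Big)^{\gamma}\ \Big|\ X_1=x\Big],
\]
where, given $X_1=x$, the variable $\rho_k$ (distance from $x$ to its $k$-th nearest neighbour among the remaining $N-1$ copies of $p^n$) and the variable $\gamma_k$ (the analogous distance among the $M$ copies of $f$) are conditionally independent, whence $\psi_{N,M}(x)=\mathbb{E}\big[((N-1)c\,\rho_k^{d})^{\gamma}\mid X_1=x\big]\cdot\mathbb{E}\big[(M\,c\,\gamma_k^{d})^{-\gamma}\mid X_1=x\big]$.

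The first substantial step is the local limit law for the two factors. If $Z_1,\dots,Z_L$ are i.i.d.\ with density $g$ that is bounded away from $0$ and uniformly Lebesgue approximable, and $R_k$ is the distance from a fixed $x$ to its $k$-th nearest $Z_j$, then the $g$-mass of $B(x,R_k)$ has the law of the $k$-th order statistic of $L$ i.i.d.\ uniforms on $[0,1]$, so $L$ times that mass converges in law to $G\sim\mathrm{Gamma}(k,1)$ as $L\to\infty$; uniform Lebesgue approximability then lets one replace the mass by $c\,R_k^{d}\,g(x)$, giving $L\,c\,R_k^{d}\,g(x)\Rightarrow G$. Under the integrability hypotheses of the theorem the families $\big(L\,c\,R_k^{d}\,g(x)\big)^{\pm\gamma}$ are uniformly integrable, so this upgrades to convergence of the $\pm\gamma$ moments; since $\mathbb{E}[G^{\beta}]=\Gamma(k+\beta)/\Gamma(k)$ for $\beta>-k$,
\[
\mathbb{E}\big[(L\,c\,R_k^{d})^{\gamma}\big]\ \longrightarrow\ g(x)^{-\gamma}\,\frac{\Gamma(k+\gamma)}{\Gamma(k)},\qquad\mathbb{E}\big[(L\,c\,R_k^{d})^{-\gamma}\big]\ \longrightarrow\ g(x)^{\gamma}\,\frac{\Gamma(k-\gamma)}{\Gamma(k)},
\]
the second because $\gamma<k$. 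These are precisely the pointwise lemmas of \cite{Poczos2011}.

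Next I would combine the two factors. Applying the local limit law to the $p^n$-sample (with $g=p^n$, $L=N-1$, exponent $\gamma$) and to the $f$-sample (with $g=f$, $L=M$, exponent $-\gamma$), conditional independence gives
\[
\psi_{N,M}(x)\ \longrightarrow\ \frac{\Gamma(k+\gamma)\,\Gamma(k-\gamma)}{\Gamma(k)^2}\Big(\frac{f(x)}{p^n(x)}\Big)^{\gamma}\qquad(N,M\to\infty).
\]
Since $k+\gamma=k-\alpha+1$ and $k-\gamma=k+\alpha-1$, the constant $B_{k,\alpha}=\Gamma(k)^2/\big(\Gamma(k-\alpha+1)\Gamma(k+\alpha-1)\big)$ is exactly the reciprocal of this Gamma prefactor, so $B_{k,\alpha}\,\psi_{N,M}(x)\to(f(x)/p^n(x))^{\gamma}$ pointwise. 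I would then pass the outer integral to the limit by dominated convergence, using the hypotheses $\int\|x-y\|^{\gamma}p(y)\,\mathrm{d}y<\infty$ a.e.\ and $\int\!\!\int\|x-y\|^{\gamma}p(y)p(x)\,\mathrm{d}y\,\mathrm{d}x<\infty$ together with $p^n$ bounded away from $0$ and $f$ bounded above to produce a majorant of $\psi_{N,M}(x)p^n(x)$ that is integrable in $x$ uniformly in $N,M$. This yields
\[
\lim_{N,M\to\infty}\mathbb{E}\big[\hat{M}_{\alpha,N}(p^n,f)\,B_{k,\alpha}\big]=\int\Big(\frac{f(x)}{p^n(x)}\Big)^{1-\alpha}p^n(x)\,\mathrm{d}x=\int p^n(x)^{\alpha}f(x)^{1-\alpha}\,\mathrm{d}x=M_\alpha(p^n,f).
\]

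The hard part will be the two uniform-integrability steps: turning the $\mathrm{Gamma}(k,1)$ limit of the rescaled $k$-NN distances into convergence of the $\pm\gamma$ moments, and dominating $\psi_{N,M}(x)\,p^n(x)$ uniformly in both sample sizes so that the $\mathrm{d}x$-integral may be taken inside the limit. These are exactly what the technical hypotheses are designed to supply: $p^n$ bounded away from $0$ and uniformly Lebesgue approximable pin down the local behaviour of $\rho_{k,i}$, the condition $\int F(x,p,\delta,1)p(x)\,\mathrm{d}x<\infty$ for small $\delta$ --- whose summand already carries the $\Gamma(\gamma+j)$ factors of the relevant order-statistic moments --- furnishes the integrable majorant on the $p^n$-side, the $\|x-y\|^{\gamma}$-moment conditions handle the remaining distance moments and the double integral, and $f$ bounded above keeps $M\,c\,\gamma_{k,i}^{d}$ bounded away from $0$ so that $(M\,c\,\gamma_{k,i}^{d})^{-\gamma}$ has uniformly bounded moments. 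With these estimates in place --- all carried out in detail in \cite{Poczos2011} --- it only remains to check that $\hat{M}_{\alpha,N}$ coincides with their estimator under the identifications $(g,L)=(p^n,N-1)$ for the distances $\rho_{k,i}$ and $(g,L)=(f,M)$ for the distances $\gamma_{k,i}$.
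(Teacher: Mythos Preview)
The paper does not supply its own proof of this theorem: it is quoted verbatim from P\'oczos and Schneider \cite{Poczos2011}, with the generic densities $p$ and $q$ of that reference standing in for $p^n$ and $f$ here, and no argument is given in the body of the paper. So there is nothing in the paper to compare your proposal against beyond the citation itself.

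That said, your outline is a faithful sketch of the P\'oczos--Schneider argument: reduce by exchangeability to a single summand, condition on $X_1=x$ and split via independence of the two samples, invoke the $\mathrm{Gamma}(k,1)$ limit for rescaled $k$-NN volumes to compute the $\pm\gamma$ moments (this is where $0<\gamma<k$ is used), identify the resulting $\Gamma(k+\gamma)\Gamma(k-\gamma)/\Gamma(k)^2$ prefactor as $B_{k,\alpha}^{-1}$, and then pass to the limit under the outer integral by dominated convergence. Your identification of which hypothesis drives which step --- $p^n$ bounded below and uniformly Lebesgue approximable for the local limit, the $F(x,p,\delta,1)$ integrability for the majorant on the $p^n$-side, the $\|x-y\|^{\gamma}$ moment conditions for the double integral, and $f$ bounded above for control of $(Mc\,\gamma_{k}^{d})^{-\gamma}$ --- matches the role these assumptions play in \cite{Poczos2011}. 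One small caution: the hypotheses in the statement are phrased for $p$ and $q$, so when you apply the $-\gamma$ moment limit to the $f$-sample you are implicitly using only that $f$ is bounded above (not the full Lebesgue-approximability package), which is indeed how the cited proof proceeds; you have this right, but it is worth flagging since the asymmetry in the assumptions is easy to miss.
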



The following theorem provide conditions under which estimator $\hat{M}_{\alpha,N}(p^n,f)$ is L2 consistent (P\'oczos and Schneider (2011)).

\begin{theorem}[$L_2$ consistency]
We have the following assumptions: $k\geq 2$, $0<\gamma = 1-\alpha< (k-1)/2$, $p$ is bounded away from 0, $p$ is uniformly Lebesgue approximable, $\exists\,\delta_0$ such that $\forall\,\delta\,\in\,(0,\delta_0)\:\int F(x,p,\delta,1/2)p(x)\mathrm{d}x<\infty$,\\
$\int\Vert x-y\Vert^\gamma p(y)\mathrm{d}y<\infty$ for almost all $x\in \mathbb{R}^d$,
$\int\int\Vert x-y\Vert^\gamma p(y)p(x)\mathrm{d}y\mathrm{d}x<\infty$, and that $q$ is bounded above. Then
\begin{equation}
\lim_{n,m\rightarrow\infty}\mathbb{E}\Big(\big(\hat{M}_{\alpha,N}(p^n,f)B_{k,\alpha}-M_\alpha(p^n,f)\big)^2\Big)=0
\end{equation}
that is, the estimator is $L_2$ consistent.
\end{theorem}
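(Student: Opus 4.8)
{\bf Proof (sketch).}
\par The plan is to argue through the bias--variance decomposition, using the asymptotic unbiasedness already established in Theorem \ref{h4}. Put $\hat{M}=\hat{M}_{\alpha,N}(p^n,f)\,B_{k,\alpha}$ and $M=M_\alpha(p^n,f)$, so that
\[
\mathbb{E}\big[(\hat{M}-M)^2\big]=\big(\mathbb{E}[\hat{M}]-M\big)^2+\mathrm{Var}(\hat{M}).
\]
Note first that we are here in the regime $\alpha\in(0,1)$, i.e. $\gamma=1-\alpha>0$, complementary to that of Theorem \ref{h3}. The hypotheses of the present statement contain those of Theorem \ref{h4} --- $0<\gamma<(k-1)/2$ is stronger than $0<\gamma<k$, and with $p^n$ bounded away from $0$ the integrability of $F(x,p^n,\delta,1)\,p^n(x)$ is available for $\delta$ small --- so the squared bias tends to $0$ as $n,m\to\infty$. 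It therefore remains to show $\mathrm{Var}(\hat{M})\to0$, equivalently, since the bias vanishes, $\mathbb{E}[\hat{M}^{\,2}]\to M^2$.

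\par Write $Z_i=\big(\frac{(N-1)\rho_{k,i}^d}{m\,\gamma_{k,i}^d}\big)^{1-\alpha}$ (with $m$ the size of the $Y$-sample), so that $\hat{M}=\frac{B_{k,\alpha}}{N}\sum_{i=1}^N Z_i$ and, by exchangeability of the $X_i$,
\[
\mathbb{E}\big[\hat{M}^{\,2}\big]=\frac{B_{k,\alpha}^{2}}{N}\,\mathbb{E}\big[Z_1^{2}\big]+\frac{(N-1)B_{k,\alpha}^{2}}{N}\,\mathbb{E}\big[Z_1Z_2\big].
\]
The technical heart of the argument is a uniform second moment bound $\sup_{N,m}\mathbb{E}[Z_1^{2}]<\infty$. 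Conditionally on $X_1$, the scaled $k$-NN volumes $(N-1)c\,\rho_{k,1}^d$ and $m\,c\,\gamma_{k,1}^d$ concentrate, in the manner of Gamma variables of shape $k$, about $1/p^n(X_1)$ and $1/f(X_1)$ respectively; raising their ratio to the power $2(1-\alpha)=2\gamma$, one finds $\mathbb{E}[Z_1^2]$ to be governed by integrals of the form $\int F(x,p^n,\delta,1/2)\,p^n(x)\,dx$ together with $\int\!\!\int\|x-y\|^{\gamma}p^n(y)p^n(x)\,dy\,dx$, these being finite precisely under the stated hypotheses $2\gamma<k-1$, $p^n$ bounded away from $0$ (so $\rho_{k,1}$ cannot be too large, as $\gamma>0$), $f$ bounded above (so $\gamma_{k,1}$ cannot be too small), and $p^n$ uniformly Lebesgue approximable (to pass between $\hat p^n,\hat f$ and the true density values). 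This is exactly the moment computation of P\'oczos and Schneider \cite{Poczos2011}, which I would quote. With this bound the diagonal term above is $O(1/N)\to0$.

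\par For the off-diagonal term I would use the fact that two fixed sample points become, asymptotically, surrounded by disjoint local neighbourhoods: the probability that $X_1$ lies among the $k$ nearest neighbours of $X_2$, or that their $k$-NN balls in the $X$-sample or in the $Y$-sample overlap, is $O(k/N)+O(k/m)\to0$. Off this negligible event $Z_1$ and $Z_2$ are functions of disjoint portions of the two samples, hence independent; combined with the uniform integrability of $\{Z_1Z_2\}$ furnished by the moment bounds of \cite{Poczos2011} (via Cauchy--Schwarz and exchangeability), this gives $\mathrm{Cov}(Z_1,Z_2)\to0$, and since $\mathbb{E}[Z_1]=\mathbb{E}[Z_2]\to M/B_{k,\alpha}$ by Theorem \ref{h4} we obtain $\mathbb{E}[Z_1Z_2]\to(M/B_{k,\alpha})^2$. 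Hence $\mathbb{E}[\hat{M}^{\,2}]\to 0+B_{k,\alpha}^{2}(M/B_{k,\alpha})^2=M^2$, so $\mathrm{Var}(\hat{M})\to0$ and the $L_2$ error tends to $0$.

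\par The principal obstacle is the uniform second moment bound $\sup_{N,m}\mathbb{E}[Z_1^2]<\infty$: it is what pins down the hypotheses $k\ge2$, $2\gamma<k-1$, the boundedness conditions on the densities and the $F(x,p,\delta,1/2)$ integrability, and it is the single place where the delicate $k$-NN order-statistic estimates of \cite{Poczos2011} are genuinely required. The remaining ingredients --- the bias--variance split, the diagonal/off-diagonal decomposition, and the asymptotic independence coming from disjoint neighbourhoods --- are soft.
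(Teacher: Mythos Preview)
The paper does not actually prove this theorem: it is stated verbatim as a result of P\'oczos and Schneider \cite{Poczos2011}, with no accompanying argument. So there is no ``paper's own proof'' to compare against --- the authors simply quote the conclusion and move on.

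Your sketch, by contrast, does outline a genuine argument, and it is the right one: the bias--variance split, followed by the diagonal/off-diagonal decomposition of $\mathbb{E}[\hat M^{2}]$, with the diagonal term controlled by a uniform second-moment bound on the $k$-NN ratio (this is where $2\gamma<k-1$, the $F(\cdot,\cdot,\cdot,1/2)$ integrability, and the density bounds enter) and the off-diagonal covariance killed by the asymptotic disjointness of $k$-NN neighbourhoods. This is precisely the architecture of the proof in \cite{Poczos2011}, and you correctly flag the second-moment bound as the hard step that must be quoted rather than redone.

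One small caveat: you assert that the present hypotheses contain those of Theorem~\ref{h4}, in particular that integrability of $F(x,p,\delta,1)\,p(x)$ follows from the stated assumptions. That does not quite drop out of ``$p$ bounded away from $0$'' alone --- the $\omega=1$ and $\omega=1/2$ versions of $F$ differ in the Gamma factors and the exponents, and in \cite{Poczos2011} the unbiasedness and $L_2$ conditions are listed separately for a reason. It does no harm to your sketch (the bias argument is rerun in \cite{Poczos2011} under the $L_2$ hypotheses anyway), but I would not claim the implication without checking it.
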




\par\hspace{1cm}The last  theorem show the consistency $ \hat{M}_{\alpha,N}(p^n,f)$ which is the estimator of the common integral part  to the three differences. Based on this, estimators  $\hat{\mathcal{R}}_{\alpha,N,k}(p^n,f)$, $\hat{D}_{\alpha,N,k}(p^n,f)$ and $\hat{T}_{\alpha,N,k}(p^n,f)$ are also consistent.

\section{Asymptotic distribution of divergences estimators}

\hspace{1cm} We seek to know the asymptotic distribution of our estimators. The asymptotic distribution of these estimators is studied under the assumption of continuity of densities $p^n$ and $f$. If densities $p^n$ and $f$ are continuous, their estimators will also be continuous.

\subsection{Asymptotic distribution of $\alpha$-divergence estimator}

Recall that the asymptotically unbiased estimator of

\[ D_\alpha(p^n,f)=\frac{1}{\alpha(1-\alpha)}\Big(1-\int\Big(\frac{f(x)}{p^n(x)}\Big)^{1-
\alpha} p^n(x)\mathrm{d}x\Big) \]
is
\[ \hat{D}_{\alpha,N,k}(p^n,f)=\frac{1}{\alpha(1-\alpha)}\Big(1-\frac{1}{N}\sum_{i=1}^N\Big(\frac{(N-1)V(Hs(X_i,\rho_{k,i}))}{mV(Hs(X_i,\gamma_{k,i}))}\Big)^{1-\alpha}\times B_{k, \alpha}\Big). 
\]
Let
\[h(X_i)=\Big(\frac{(N-1)V(hs(X_i,\rho_{k,i}))}{mV(Hs(X_i,\gamma_{k,i}))}\Big)^{1-\alpha}
\]
and
\[ \bar{Y}_N=\frac{1}{N}\sum_{i=1}^N h(X_i) \]

We know that $X_i$ are i.i.d. $\sim p^n$  . Function $h$ which is the ratio of two continuous functions is also continuous. We have therefore the independence of $h(X_i),\,i=1,\ldots,N$. We can then apply the central limit theorem. Then we will have

\begin{eqnarray}
\frac{\bar{Y}_N-\mathbb{E}(\bar{Y}_N)}{\sqrt{Var(\bar{Y}_N)}}& \xrightarrow{d} & \mathcal{N}\big(0,1 \big)\nonumber\cr
\bar{Y}_N\times B_{k,\alpha}-B_{k,\alpha}\mathbb{E}(\bar{Y}_N)&\xrightarrow{d}& 
\mathcal{N}\Big(0,\sigma^2 B^2_{k,\alpha}\Big)\nonumber\cr
\end{eqnarray}
with $\sigma^2=\lim_{N\to \infty} Var(\bar{Y}_N)$ we have a normal distribution.
But $\bar{Y}_N$ is asymptotically biased (P\'oczos and Schneider \cite{Poczos2011}). Multiplyind it by a constant $B_{k,\alpha}$ as stated by these authors gives us asymptotically unbiased estimator of $M_{\alpha}(p^n ,f)=\int(p^n(x))^\alpha(f(x))^{1-\alpha}\mathrm{d}x$. We will have

\[ \bar{Y}_N B_{k,\alpha}\xrightarrow{d}\mathcal{N}\big(M_{\alpha}(p^n ,f)\,, \sigma^2 B_{k,\alpha}^2\big)
\]

Hence

\begin{equation}
 \hat{D}_{\alpha,N,k}(p^n,f)\xrightarrow{d} \mathcal{N}
\Big(D_{\alpha}(p^n,f)\,, \frac{\sigma^2 B_{k,\alpha}^2 }{\alpha^2(1-\alpha)^2}\Big)
\end{equation}
with  $N,\,M\to\infty$ and $\sigma^2<\infty$.

\subsection{Asymptotic distribution of Tsallis $\alpha$-divergence estimator}

For Tsallis divergence
\[ T_\alpha(p^n,f)=\frac{1}{\alpha -1}\Big(\int\Big(\frac{f(x)}{p^n(x)}\Big)^{1-\alpha}
p^n(x)\mathrm{d}x -1\Big) \]
its estimator
\[
\hat{T}_{\alpha,N,k}(p^n,f)=\frac{1}{\alpha -1}\Big(\frac{1}{N}\sum_{i=1}^N\Big(\frac{(N-1)V(Hs(X_i,\rho_{k,i}))}{M\:V(Hs(X_i,\gamma_{k,i}))}\Big)^{1-\alpha}
\times B_{k, \alpha}-1\Big)
\]
follows a normal distribution. As the previous estimator, we have the same procedure. We have now,

\begin{equation}
\hat{T}_{\alpha,N,k}(p^n,f)\xrightarrow{d}  \mathcal{N}\Big(
T_{\alpha}(p^n,f)\,, \frac{\sigma^2 B^2_{k,\alpha}} {(
\alpha -1)^2}\Big)
\end{equation}

with  $N,M\to\infty$ and $\sigma^2<\infty$.

\subsection{Asymptotic distribution for R\'enyi $\alpha$-divergence estimator}

Now back to our estimator of R\'enyi divergence.

\[ \hat{R}_{\alpha ,N,k}(p^n,f)=\frac{1}{\alpha -1}\log \Big(\frac{1}{N}\sum_{i=1}^N
\Big(\frac{\hat{f}_{k,N}(X_i)}{\hat{p}_{k,n,N}(X_i)}\Big)^{1-\alpha}\times B_{k,\alpha}\Big) \]
We had 
\[ h(X_i)=\Big(\frac{\hat{f}_{k,N}(X_i)}{\hat{p}_{k,n,N}(X_i)}\Big)^{1-\alpha}
\,\textrm{and}\quad \bar{Y}_N =\frac{1}{N}\sum_{i=1}^N h(X_i)\]

$X_i$ are i.i.d. and if $h$ is continuous, then $h(X_i)$ are also i.i.d.

According to the central limit theorem, we will have

\[ \frac{\bar{Y}_N -\mathbb{E}(\bar{Y}_N)}{\sqrt{Var(\bar{Y}_N)}}\xrightarrow{d}
\mathcal{N}(0,1) \]

then,
\[ \sqrt{N}\big(\bar{Y}_N-\mathbb{E}(\bar{Y}_N)\big)\xrightarrow{d}
\mathcal{N}\big(0, N\sigma^2\big) ,\] 

if $\mathbb{E}(\bar{Y}_N)<\infty$, $\sigma^2<\infty$ and $N$ large. We can apply \emph{delta method} and we have

\[ \sqrt{N}\big(\log(\bar{Y}_N)-\log(\mathbb{E}(\bar{Y}_N))\big)\xrightarrow{d}
\mathcal{N}\Big(0, \frac{N\sigma^2}{[\mathbb{E}(\bar{Y}_N)]^2}\Big) .\] 
 
Hence

\begin{equation}
\hat{R}_{\alpha ,N,k}(p^n,f)\xrightarrow{d}\mathcal{N}\Big(\frac{1}{\alpha -1}\log[B_{k,
\alpha}\mathbb{E}(\bar{Y}_N)]\,, \frac{\sigma^2}{(\alpha -1)^2[\mathbb{E}(\bar{Y}_N)]^2}\Big)
\end{equation} 
 
From P\'oczos at Schneider \cite{Poczos2011}, $B_{k,\alpha}\mathbb{E}(\bar{Y}_N)\rightarrow M_{\alpha}(p^n,f)$, when $N, M\rightarrow \infty$. From which 

\begin{equation}
\hat{R}_{\alpha ,N,k}(p^n,f)\xrightarrow{d}\mathcal{N}\Big(R_{\alpha}(p^n,f)\,, \frac{\sigma^2}{(\alpha -1)^2[\mathbb{E}(\bar{Y}_N)]^2}\Big)
\end{equation}

\section{Examples}
\par \hspace{1cm}We will now illustrate our methodology with simple examples. That is why we will limite ourselves to  one-dimensional and two-dimensional cases. In the following examples , we will use  divergence measures to compare proposal densities corresponding to a given simulation strategy. This difference of  densities may appear at their parameters. The comparison will be for different parameters values. The proposal density considered as optimum is that which the divergence measure (function of $n$) between successive densities $p^n$ arising and the target density $f$ tends to $0$ faster.
We will mainly compare as we have already told simulation strategies.
\subsection{One-dimensional case}
\subsubsection{Target density $f$ fully known}
\par \hspace{1cm}In the case where the target density $f$ is analytically known the estimators of our divergence measures are slightly modified. This change applies at the constant $ B_{k,\alpha} $. This constant is replaced by another in order to get there also asymptotically unbiased estimators. Then consider 

\[\hat{M}_{\alpha ,N}(p^n,f)=  \frac{1}{N}\sum_{i=1}^N\Big(\frac{\hat{f}_{k,N}(X_i)}{\hat{p}_{k,n,N}(X_i)}\Big)^{1-\alpha} \]

If we replace the density's estimator $f$ by $f$ itself we will have

\[\hat{M}'_{\alpha,N}(p^n,f)=\frac{1}{N}\sum_{i=1}^N\Big( \frac{f(X_i)(N-1)c\rho_{k,i}^d}{k}\Big)^{1-\alpha} \]
which is an another estimator of the common integrale part $M_\alpha(p^n,f)$ but asymptotically biased. As in the proof of Theorem \ref{h4} (P\'oczos at Schneider (2011)) we can simply check that the new estimator multiplied by the constant 
$Q_{k,\alpha}=\frac{\Gamma(k-\alpha +1)}{k^{1-\alpha}\Gamma(k)}$ provides an asymptotically unbiased estimator of $M_\alpha(p^n,f)$.

\hspace{1cm}In one-dimension we are in the set $\mathbb{R}$, therefore $d=1$. Consequently the d-dimensional hypersphere around $X_i$  boils down to the line segment with $X_i$ middle. The real $c$ represents the length measuring  of a line segment with $X_i$ middle. This segment has 2 as length measuring since the distance from $X_i$ at each end is 1. As suggested by   Loftsgaarden and Quesenberry \cite{Loftsgaarden1965} we will take $k$ equals to the nearest integer of $\sqrt{N-1}$. We have now
\begin{equation}
\hat{M}'_{\alpha,N}(p^n,f)=\frac{1}{N}\sum_{i=1}^N\Big( \frac{2\rho_{k,i}(N-1)f(X_i)}{k}\Big)^{1-\alpha}  
\end{equation}
\par\hspace{1cm}We can now have the asymptotically unbiased estimators  
\[\hat{D}'_{\alpha,N,k}(p^ n,f)=\frac{1}{\alpha(1-\alpha)}\Big(1-\hat{M}'_{\alpha,N}(p^n,f)\times
Q_{k,\alpha}\Big)\]
\[\hat{T}'_{\alpha,N,k}(p^n,f)=\frac{1}{\alpha-1}\Big(\hat{M}'_{\alpha,N}(p^n,f)\times Q_{k,\alpha}
-1\Big)\] 
respectively of $D_\alpha(p^n,f)$ and $T_\alpha(p^n,f)$. For the estimator of the R\'enyi divergence
\[\hat{R}'_{\alpha,N,k}(p^n,f)=\frac{1}{\alpha-1}\log \Big(\hat{M}'_{\alpha,N}(p^n,f)\times Q_{k,\alpha}\Big)\]
 we don't have results regarding the presence or absence of bias. But the most important is the fact that  three divergence estimators $\hat{D}'_{\alpha,N,k}(p^ n,f)$, $\hat{T}'_{\alpha,N,k}(p^n,f)$ and $\hat{R}'_{\alpha,N,k}(p^ n,f)$  are both consistent.

\par\hspace{1cm}First we show an example that compares two proposal densities\footnote{\small{Compare densities is equivalent to compare the respective probability distributions, so we will use interchangeably densities or probability distributions}}  for a given strategy. We choose the Independence Sampler (IS), which is one of the  strategies of the Metropolis Hastings algorithm, to compare these densities.

\par \textit{a) Independence Sampler (IS)}: comparison of proposal densities
\par\hspace{1cm} For a given simulation strategy  the choice of good proposal density is important. Indeed, for achieving  satisfactory simulation results it is important to choice a good proposal density. For simplicity, we consider densities that differ by the value of their parameters. Thus we take  respective distribution densities  $\mathcal{N}(-3,2)$ and $\mathcal{N}(0,3)$; the target density is from normale standard distribution $\mathcal{N}(0,1)$. It is found by looking at  Figure 1 that the dashed curve converges very rapidly to 0 while the solid curve is slow to converge.

\begin{figure}[h]
	\centering
		\includegraphics[scale=0.55]{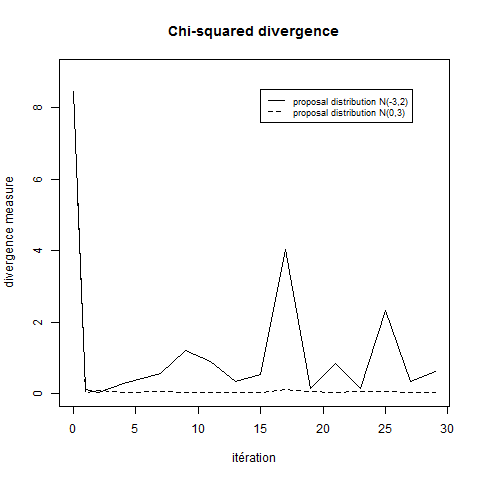}
	\caption{\small{Comparison of two proposal densities,using the Independence Sampler and
	the Chi-squared divergence ($D_{\alpha}$ with $\alpha=2$)}}
\end{figure}
\par \textit{b) IS - RWMH\footnote{Random Walk of Metropolis-Hastings} }: comparison of simulation strategies
\par\hspace{1cm}After finding a good  proposal distribution for each strategy, we compare here the two main strategies of  MH algorithm that are IS and RWMH. It may happen in an experiment that the IS strategy trumps RWMH strategy and in another experiment the opposite occurs. Everything depends on the instrumental distribution but also the target distribution to some extent even if the initial law is the same for both strategies.
\par\hspace{1cm}Here the target distribution is a gaussian mixture $0.4\,\mathcal{N}(-8,2)+0.6\,\mathcal{N}(0,6)$. For the \emph{Independence Sampler} we use the proposal distribution $\mathcal{N}(-2.5,15)$ whereas for RWMH method we propose to take $\mathcal{N}(x,15)$. This distribution has as mean equal to the current element $X_n=x$. We show the comparison of IS and RWMH strategies (Fig. 2). Note that the curve associated with IS strategy is below curve associated with RWMH.
However, the two curves converge very quickly to 0.
\begin{figure}[h]
	\centering
		\includegraphics[scale=0.55]{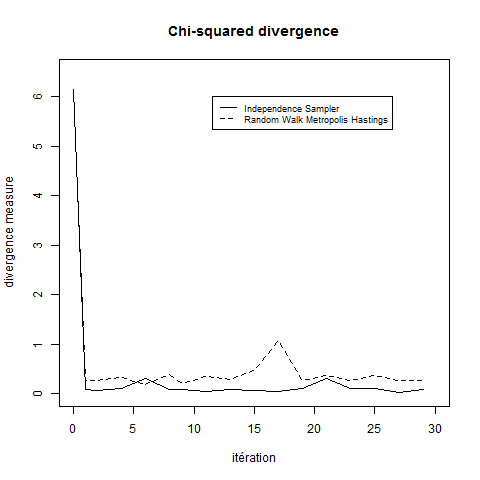}
	\caption{\small{Comparison of two simulation strategies: Independence Sampler vs Random Walk of 
	Metropolis-Hastings, using the Chi-squared divergence ($D_{\alpha}$ with $\alpha=2$) }}
\end{figure}

\subsubsection{Target density $f$ is not known completely}
\par\hspace{1cm} In most real situations, the density $f$ is not known analytically. This is the case, for example Bayesian context where $f$ is the density of the posterior. Then $f$ is written as $f = c\varphi$ where $c$ is unknown constant.     \cite{Poczos2011}
\par \textit{-- Adaptive Metropolis\footnote{This simulation strategy is proposed by Haario et al. (2001)}  - RWMH}
\par Here the target density is known up to a constant
\[ f(x)\varpropto\exp(-x^2)(2+\sin(5x)+\sin(2x)) \]
\par\hspace{1cm}Present some \emph{Adaptive Metropolis} (AM) strategy proposed by Haario et al (2001). First recall that the stochastic process generated by this simulation method is not a Markov chain. However it has well ergodicity properties. The assumptions for this are that the target density is bounded from above and has a bounded support. Describe the algorithm now.
\par\hspace{1cm} The target density has a support $E\subset\mathbb{R}^d$. Suppose, that at time $t$ we have sampled the states $X_0, X_1,\ldots, X_{t-1}$,
where $X_0$ is the initial state. Then a candidate point $Y$ is sampled from the (asymptotically
symmetric) proposal distribution $q_t(.|X_0, ..., X_{t-1})$, which now may depend on the whole
history $(X_0, X_1, ..., X_{t-1})$. The candidate point $Y$ is accepted with probability
\[\alpha(X_{t-1},Y)=\min \Big(1,\frac{\pi(Y)}{\pi(X_{t-1})}\Big)\]
in which case we set $X_t=Y$, and otherwise $X_t=X_{t-1}$. Observe that the chosen probability for the 
acceptance resembles the familiar acceptance probability of the Metropolis algorithm. The proposal
distribution $q_t(.|X_0,\ldots, X_{t-1})$ employed in the AM algorithm is a Gaussian distribution with mean at the current point $X_{t-1}$ and covariance 
\[C_t=\left\lbrace\begin{array}{ll}
C_0\,, & t\leq t_0 \\
S_d Cov(X_0,\ldots,X_{t-1})+S_d \epsilon I_d\,,& t>t_0\,.
\end{array}\right.\] 
where $S_d$ is a parameter that depends only on dimension $d$ and $\epsilon>0$ is a constant that we may choose very small compared to the size of $E$. Here $I_d$ denotes the $d$-dimensional identity matrix.
The covariance $C_t$ may be viewed as a function of $t$ variables from $\mathbb{R}^d$ having values in uniformly positive definite matrices.

\par\hspace{1cm}For these two simulation methods (AM and RWMH) we see that the respective  divergence measures stand very close and very quickly all tend to 0 (Fig. 3). Indeed, near 0 divergence measures which are in this case the \emph{Hellinger divergence} measures $D_{1/2}$ are slight oscillations  and intersect.

\begin{figure}[h]
	\centering
		\includegraphics[scale=0.55]{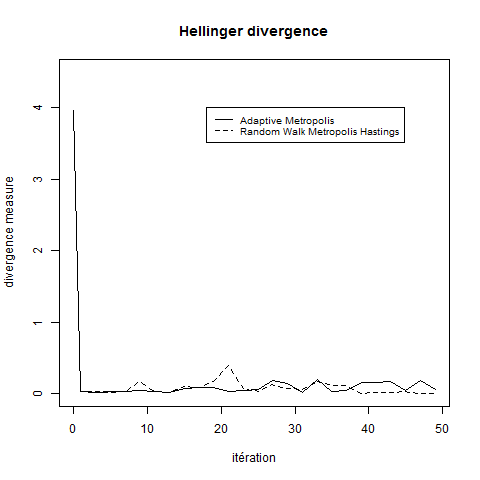}
	\caption{\small{Comparison of two simulation strategies: Adaptive Metropolis vs Random Walk of 
	Metropolis-Hastings, using the Hellinger divergence ($D_{\alpha}$ with $\alpha=1/2$)}}
\end{figure}
\subsection{Two-dimensional case}
\par\hspace{1cm}We consider here the only case of a known function up to a constant. We choose now a sample $X_1,\ldots,X_n$ which are i.i.d.\footnote{independent identically distributed} such that $X_i\sim \mathcal{N}(m,\sigma^2)$.
So we have the following likelihood
\[L(x|m,\sigma^2)\propto(\sigma^2)^{(-n/2)}\exp \Big(-\frac{1}{2\sigma^2}\sum_{i=1}^n(x_i-m)^2\Big),\]
the prior distributions are 
\[m\sim\mathcal{N}(m_0,\sigma_0^2)\]
\[\sigma^2\sim\mathcal{IG}(\alpha,\beta),\]
the full posterior density is known up to a constant
\[\Pi(m,\sigma^2|x)\propto(\sigma^2)^{-\frac{n}{2}-(\alpha+1)}\exp\Big(-\frac{1}{2\sigma^2}
\sum_{i=1}^n(x_i-m)^2-\frac{(m-m_0)^2}{2\sigma_0^2}-\frac{\beta}{\sigma^2}\Big),\]
the conditional distributions of parameters are
\[m|\sigma^2,x\sim\mathcal{N}(M,\Sigma^2)\]
where
\[M=\frac{\sigma_0^2\sum_{i=1}^n x_i +\sigma^2m_0}{\sigma^2+n\sigma_0^2}\quad\textrm{and}\quad
\Sigma^2=\frac{\sigma^2\sigma_0^2}{\sigma^2+n\sigma_0^2}\] 
\[\sigma^2|m,x\sim\mathcal{IG}\Big(\frac{n}{2}+\alpha,\frac{1}{2}\sum_{i=1}^n(x_i-m)^2+\beta\Big)\]
\par\hspace{1cm}So let's compare firstly RWMH and Gibbs sampler and secondly the RWMH and Metropolis Within Gibbs. $\Pi(m,\sigma^2|x)$ will be our target density.
\par\textit{a) RWMH - Gibbs sampler}
\par\hspace{1cm}If RWMH  applied both in dimension $d\geq 1$ , the Gibbs sampler on it only applied in dimension $d> 1$. However, we limit ourselves here in dimension 2. Note that this method (Gibbs sampler) has been used by  Geman (1984) to generate observations from a Gibbs distribution (Boltzmann distribution). It is a particular form of the MCMC method, because of its effectiveness, is widely used in many fields of Bayesian analysis. Thus to simulate according to a probability density $ f (\theta) $ with $ \theta = 
(\theta_1,\ldots,\theta_p)$ one can use the following idea
\par Initialisation: generating a vector $\theta=(\theta_1,\ldots,\theta_p)$ 
according to a initial proposal law $\Pi_0$.  
\par Simulate following the conditional distributions
\[ \theta_i|\theta_1,\ldots,\theta_{i-1},\theta_{i+1},\ldots,\theta_p \sim f_i(\theta_i|
\theta_1,\ldots,\theta_{i-1},\theta_{i+1},\ldots,\theta_p) \]
$i=1,2,\ldots,p$.
\par\hspace{1cm}We see that the curve corresponding to the  RWMH strategy is well above that representing the Gibbs sampler (Fig. 4). The two curves have only one common point that is their origin. As mentioned in the legend, the dashed curve is associated with RWMH while the solid curve is associated with the Gibbs sampler. However, the curves denote divergence measures, we will talk about it in Section 6 (Discussion).
\begin{figure}[h]
	\centering
		\includegraphics[scale=0.55]{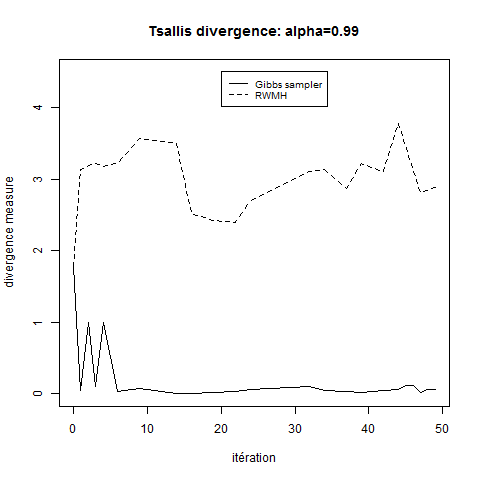}
	\caption{\small{Comparison of two simulation strategies: Gibbs sampler vs Random Walk of 
	Metropolis-Hastings, using the Tsallis divergence  with $\alpha=0.99$}}
\end{figure}

\par\textit{b) RWMH - Metropolis Within Gibbs}
\par\hspace{1cm}\emph{Metropolis Within Gibbs} is a hybrid simulation method that combines stages of the Gibbs sampler and Metropolis Hastings method. It is used in some cases where we have conditional distributions for which we can't have samples directly. There are several versions of this sampler, so we present the following.
\par\hspace{1cm}Assume that $\pi(.)$ is the target distribution\footnote{$\pi(.)$ and $q(.)$ are the probability densities functions} , $\pi(.|x_{-i})$ denote now the conditional distribution of $Z|Z_{-i}=z_{-i}$ where $Z\sim\pi$.
 $X_n:=(X_{n,1},\ldots,X_{n,d})$; $X_{n,-i}:=(X_{n,1},\ldots,X_{n,i-1},X_{n,i+1},X_{n,d})$;
 $\alpha:=(\alpha_1,\ldots,\alpha_d)$.\\
 \par\hspace{1cm}Now we have the following algorithm.
 \par\texttt{Algorithm}\\
 1. Choose coordinate $i\in\lbrace 1,\ldots,d\rbrace $ according to selection probabilities $\alpha$, that      is, with $\mathbb{P}(i = j ) = \alpha_j$ .\newline
 2. Draw $Y\sim q(X_{n-1,i},.)$.\newline
 3. Accepte the candidate $Y$ with probability
 \[\min\Big(1,\frac{\pi(Y|X_{n-1,-i})q(Y,X_{n-1,i})}{\pi(X_{n-1,i}|X_{n-1,-i})q(X_{n-1,i},Y)}\Big)\]
 and set $X_n:=(X_{n-1,1},\ldots,X_{n-1,i-1},Y,X_{n-1,i+1},\ldots,X_{n-1,d})$\\
 otherwise reject $Y$ and set $X_{n}=X_{n-1}$.
\par\hspace{1cm}Starting from a common point, the curves stay away from the value 0 (Fig. 5). The solid curve is associated with \emph{Metropolis  Within Gibbs} strategy and the dashed curve  associated with \emph{RWMH} strategy. Here we use the \emph{R\'enyi divergence} measure in order to obtain our curves. 
 \begin{figure}[h]
	\centering
		\includegraphics[scale=0.53]{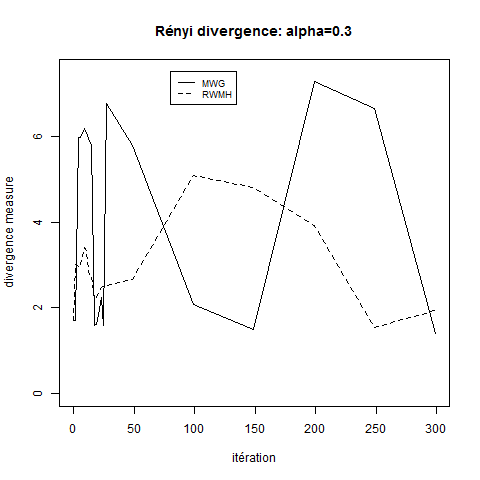}
	\caption{\small{Comparison of two simulation strategies:\emph{ Metropolis within Gibbs} vs \emph{Random Walk of Metropolis-Hastings}, using the R\'enyi $\alpha$-divergence  with $\alpha=0.3$}}
\end{figure}
\section{Discussion}
\subsection{Indepence Sampler : comparison of proposal densities}
\par\hspace{1cm}The two curves represent measurements of \emph{Chi-square divergence} $D_2$. Indeed, the dashed curve indicates the difference between the densities $p^n_1$ from the IS strategy with  proposal law $\mathcal{N} (0,3)$ and the target density $f$ of $\mathcal{N}(0,1)$, noted by $D_2(p^n_1,f)$. The solid curve denotes the difference between the densities $p^n_2$ produced by the IS strategy with proposal distribution $\mathcal{N}(-3,2)$ and the target distribution  $\mathcal{N}(0,1)$, noted by $D_2(p^n_2,f)$. 
\par\hspace{1cm} It is clear that $D_2(p^n_1,f)$ and $D_2(p^n_2,f)$ are functions of the number of iterations $n$ and have the same origin. The fact that they have the same starting point can be explained by  relevance to begin with a same state $x_0$  or same initial law $\pi_0$ to compare two simulation strategies. The two curves overlap up to the first iteration. Indeed, the IS strategy with $\mathcal{N}(0,3)$ is more efficient because this distribution has a mean $m=0$ like the mean of target distribution. Its variance $\sigma^2 = 3$ is greater than the variance of the target distribution which is $\sigma^2 = 1$. It follows that its support covers the support of the target density. The other strategy has a  proposal distribution $\mathcal{N}(-3,2)$. This law has a mean $m = -3$ and a variance $\sigma^2=2$  that make its density function is shifted to the left (see figure) and is therefore not adequate to cover the support of the target density. So that the chain  generated by this simulation strategy  will soon converge to the target distribution. 
\subsection{Independence Sampler - Random Walk of Metropolis Hastings}
\par\hspace{1cm} We said earlier that the two curves in Figure 2 converge rapidly to 0.We use also, here \emph{Chi-square divergence}.Dashed curve represent also the divergence measure   between the densities $p^n$ of RWMH strategy and target density $f(x)=0.4f_1(x,-8,2)+0.6f_2(x,0,6)$ which is Gaussian mixture density. Similarly, the solid curve represents divergence measure between the densities  $p^n$ from the IS strategy and the target density $f(x)$.
\par\hspace{1cm} Although the two divergence measures converge rapidly to $0$, we see that the divergence which represents the IS strategy is below the divergence associated with RWMH strategy. Therefore  IS strategy is more efficient, even if RWMH  is also  good method.
\par\hspace{1cm} Note that the two curves have the same starting point and overlap untill the first iteration. From there, the curve of IS strategy approximates much to $0$. This is due to the fact this method (IS) has, here, a good proposal distribution $\mathcal{N}(-2.5,15)$. With a mean $m=-2.5$ and a variance $\sigma^2=15$, its density function is centered relatively to the target density $f(x)$. Thus the support of the target density is well covered by the proposal density. 
\par\hspace{1cm} Regarding the  RWMH strategy, its has a normal proposal distribution  to each iteration $n$ with a mean equal to the current state $X_n=x_n$ and variance also equal to $\sigma^2=15$. That's why even if its has not a support that covers all the time  the target distribution's support, its does not so far away. Hence, this is also a good strategy. 
\subsection{Adaptive Metropolis - RWMH}
\par\hspace{1cm} For this comparison,  we use the \emph{Hellinger divergence} measure  $(D_{1/2})$. The solid curve then describes the divergence measure between densities $p^n$ (AM) and the target density $f(x)$. The dashed curve described the  divergence between the densities $p_n$ (RWMH)and also  target density.  The two curves merge between the initial state and the first iteration. From there, they don’t move away from each other, but coexist  around value $0$. Thus, the two corresponding simulation strategies are all very efficient. 
\par\hspace{1cm} The effectiveness of these two strategies is explained in first concerns the AM strategy, by the fact that it adapts its proposal density to the target density. The adjustment mechanism is performed at the variance of instrumental density. Indeed, if $X_1, X_2,\ldots, X_{n-1}$ have already been simulated and one desire to obtain the point (or vector) $X_n$ at time $n$, we generate with a proposal distribution with mean equal the current state of $X_{n-1}$ and covariance martice described above.
\par\hspace{1cm} Recall that we use the AM strategy here in one dimension. Thus, the variance $\sigma^2=5$ chosen for our instrumental Gaussian between the initial time and time $t=15$ ( this is the period prior to the iterative update of the variance ) yields samples which describe almost support the target density. Added to this, from the 16th iteration adjustments allow the variance to have good samples. Which explains why it has a very fast convergence.
Concerning RWMH method, with same variance $\sigma^2=5$ for the proposal density, we also obtain good samples that converge very quickly to the stationary density.
\subsection{Gibbs Sampler - RWMH}

\par\hspace{1cm} As we said earlier, we have in Figure 4 two curves designating measures of Tsallis divergence with  $\alpha = 0.99$.  The dashed curve (RWMH) is above the other and far from the value $0$ after the first 50 iterations. It shows, here, that RWMH strategy is ineffective . This inefficiency is due to the covariance matrix $M$ of the proposal distribution. This matrix which is  implementation parameter is not optimum. The solid curve (Gibbs sampler) tends rapidly to 0 (after the 7th iteration). The strength of this sample is mainly due to the fact that the components of the vector $X_n$ are generated directly from the simulated conditional distributions of the target distribution.

\subsection{Metropolis Within Gibbs - RWMH}
\par\hspace{1cm}The R\'enyi divergence measures between the respective densities $p^n$ both strategies (MWG and RWMH) and the stationary distribution are held away from the value 0 even after a large number of simulations.
What makes us to say that  two strategies are not effective.
\par\hspace{1cm} As in the previous comparison (Gibbs Sampler vs RWMH) RWMH  strategy is always slow in computation time. In fact it is the same strategy with the  instrumental Gaussian distribution with parameter matrix $M$, but evaluated using the R\'enyi divergence measure with $\alpha = 0.3$. We see thereby that we have a wide range of divergence measures allowing us to evaluate a given strategy simulation. Respecting the MwG method in our case, even after 30,000 iterations, the created process (Markov chain)  does not converge (Fig 5); even if we are limited here in 1,000 iterations. 
\par\hspace{1cm} This method (MwG) is often used in cases where one wants to use the Gibbs sampler and for some conditional distributions he can not simulate directly. Then one introduce in the algorithm a few steps of  RWMH sampler, offering instrumental distributions having for target laws the conditional distributions. In our case (Fig. 5) we used a version that systematically applies the steps of the Metropolis sampler to all conditional distributions, so that the algorithm will delay to converge.

\subsection{Conclusion}

\par\hspace{1cm} We have, therefore, shown that with various divergence measures we can compare two different simulation strategies. To achieve this, we used $\alpha$-divergence measures  that we have described in Section 2. We showed the convergence of these divergence measures and proposed estimators for these ones (Section 3). In Section 4 we gave the asymptotic distribution of each estimator. Then we gave some examples for the implementation of simulation strategies that have been described (Section 5). Finally we discussed in Section 6 of the causes that make a simulation strategy is more powerful than the other.

\nocite{*}
\bibliographystyle{plain} 
\bibliography{mcmc-1}
\end{document}